\documentclass[12pt,a4paper]{article}

\usepackage[margin=2.5cm]{geometry}
\usepackage{graphicx} 
\usepackage[T1]{fontenc}
\usepackage[utf8]{inputenc}
\usepackage{bm}
\usepackage{natbib}
\usepackage{float}
\usepackage{amsmath,amsfonts,amssymb,amsthm,mathabx}
\usepackage{soul}
\usepackage{array}
\usepackage{verbatim}
\usepackage{subfloat}
\usepackage{pdfpages}
\usepackage{url}
\usepackage{multirow}
\usepackage{setspace}
\usepackage{subfigure}
\usepackage{lscape}
\usepackage{booktabs}
\usepackage{xcolor}
\usepackage{xurl}
\usepackage[colorlinks=true, citecolor=blue, linkcolor=blue, urlcolor=blue]{hyperref}
\usepackage{ulem}

\newcommand{\bbeta}{\boldsymbol{\beta}}

\def \VV {{\mathbf V}}
\def \TT {{\mathbf T}}
\def \XX {{\mathbf X}}

\def \TT {{\mathbf T}}

\newcommand*{\bhat}{\skew{3}{\hat}{\bbeta}}

\newcolumntype{d}{D{.}{.}{2.5}}           

\newtheorem{proposition}{Proposition}

\providecommand{\keywords}[1]{\par\smallskip\noindent\textbf{Keywords:} #1}

\title{A Partial Fay--Herriot Model for Small Area Estimation: Estimating District-Level Consumption in Mozambique}

\author{%
Francesco Schirripa Spagnolo$^{1,\ast}$, Nicola Salvati$^{1}$, Enrico Fabrizi$^{2}$\\[1em]
\normalsize $^{1}$Department of Economics \& Management, University of Pisa,\\
\normalsize Via Cosimo Ridolfi 10, 56124 Pisa, Italy\\[0.3em]
\normalsize $^{2}$Department of Economic and Social Sciences, Catholic University of the Sacred Heart,\\
\normalsize Via Emilia Parmense 84, 29122 Piacenza, Italy\\[0.3em]
\normalsize $^{\ast}$Corresponding author: \href{mailto:francesco.schirripa@unipi.it}{francesco.schirripa@unipi.it}
}
\date{}

\begin{document}

\maketitle

\begin{abstract}
This paper proposes a new small area estimation approach that integrates Partial Least Squares within the Fay--Herriot model to address the challenges posed by high-dimensional and highly correlated auxiliary variables sets. The resulting Partial Fay--Herriot (PFH) estimator constructs supervised components that maximize their association with the target variable, enhancing model stability and predictive efficiency. Monte Carlo simulations demonstrate that PFH estimator achieves lower mean squared error than the standard Fay--Herriot estimator and outperforms principal components–based alternatives while relying on fewer latent dimensions. The methodology is applied to the estimation of district-level per capita consumption in Mozambique, where the survey data source is complemented by large set of correlated census variables. The resulting estimates highlight pronounced geographic heterogeneity and uncover spatial clusters of deprivation. Overall, the findings show that the proposed supervised dimension-reduction approach represents an effective and easily interpretable tool for producing reliable indicators in high-dimensional contexts.
\end{abstract}

\keywords{Model-based estimation, official statistics, poverty indicators, supervised dimension reduction, unplanned domains}

\section{Introduction} \label{sec:intro}
Achieving the United Nations Sustainable Development Goals (SDGs) hinges on reducing poverty and inequalities including regional disparities. As a result, conducting thorough poverty assessments at a spatially disaggregated level helps pinpoint areas where implementing some supporting policies is essential. However, the existing surveys on income and living conditions are usually planned to provide reliable results only for large areas. Direct estimators, relying solely on domain-specific sample data, can lead to unreliable estimates for detailed disaggregations of the population. Expanding sample sizes of surveys is costly and time-consuming. Therefore, to address this challenge, small area estimation (SAE) methods have been developed. These methods enable obtaining estimates in specific small areas (or domains) with sufficient precision by combining survey data and other information sources, such as administrative, census, or other types of geographical data.

The phrase small area usually refers to any sub-population of the population of interest, such as geographical areas or socio-demographic groups \citep{rao2015}. SAE methodologies can be categorized into two main types: unit-level and area-level models. Unit-level models relate the values of a variable of interest to auxiliary information at the individual (observation) level; while area-level models are based on first computing direct estimators at the area level and then using information aggregated at the same level.
In recent years, these models have become more and more common. They do not require the assumption that the auxiliary variables are measured both in the survey and the auxiliary source and that their measurement is consistent \citep{tarozzi2009using}. Moreover, area level models are more apt to leverage diverse information sources available at the areal level or available only at the area level because of restrictions to their disclosure at the unit level. Eventually they enjoy in most cases a property of design-consistency that is appreciated by users and offers a protection against model failure \citep{de2024extended}. Both kinds of models usually rely on linear mixed models (LMMs) where the area-specific random effects capture the unobserved heterogeneity in the response variable across different areas. The most popular area-level model is the Fay--Herriot (FH), proposed in \cite{FH1979}. 

However, auxiliary data sources can include a lot of variables available for each area of interest and in some cases, their number ($p$) can exceed the number of areas $(m)$ itself, that is $p>m$. Quite likely, these auxiliary variables are largely redundant, i.e., they could be grouped in blocks of highly correlated variables. 

This is precisely the case in our study of Mozambique, a country where poverty remains widespread and regional disparities are particularly pronounced. Indeed, large segments of the population continue to face poor living conditions, with significant inequalities across districts \citep{egger2023, belchior2025}. Reliable poverty estimates at a fine geographical scale are therefore crucial for the government and international organizations to effectively design, monitor, and target poverty-reduction strategies.

In this setting, we focus on estimating average per capita consumption at the district level by combining survey and census data. The availability of a rich set of housing, demographic, and service-related covariates provides valuable predictive potential, but also raises serious issues of redundancy and multicollinearity.

To address these challenges, the researcher needs either to: \textit{i}) select a set of important variables among the large number of predictors, (which is unavoidable when $p>m$) \textit{ii}) apply regularization techniques (such as Ridge, LASSO, or Elastic Net), which allow simultaneous variable selection and shrinkage, or \textit{iii}) handle multicollinearity, as it makes standard regression approaches very inefficient. The use of dimension reduction methods represents a possible solution in this direction. Dimension reduction methods aim at reducing the numerous, redundant, original predictors by creating a small number of new ones (components) obtained as combinations of the former. Principal components analysis (PCA) represents, in this context, a traditional and popular approach. However, when PCA is applied, the new components are identified in an \textit{unsupervised} manner, as the outcome variable plays no role in determining the principal component directions. A \textit{supervised} alternative approach is given by Partial Least Squares (PLS) \citep{wold1966,helland1988} which uses the response to construct the directions of the new dimensions. As far as we know, no PLS-based methods have been developed and applied in the SAE context. 
An additional strength of our approach lies in the fact that it not only operates in a fully automatic manner but also has the potential to generate interpretable dimensions, which can further enhance the usefulness of the results.

To the best of our knowledge, no existing SAE methods integrate PLS in a supervised manner within the Fay--Herriot model. Our contribution is methodological, computational, and empirical: we construct a supervised dimension-reduction strategy tailored to area-level mixed models, derive its design-consistency, and demonstrate its advantages through extensive MC simulations and an application to Mozambique. The proposed approach provides a novel and effective tool to deliver reliable, disaggregated poverty indicators in Mozambique and can support the achievement of the SDGs in one of the world’s most vulnerable contexts. 

The remainder of this article is structured as follows. Section \ref{sec:data} describes the Mozambique consumption survey data used in our empirical application. Section \ref{sec:method} introduces the notation and provides a brief overview of FH model and the standard PLS approach, before presenting our proposed method for integrating PLS within the SAE framework. Section \ref{sec:simul} is devoted to assessing the performance of the proposed method through Monte Carlo (MC) simulations. In Section \ref{sec:appl}, we illustrate its practical relevance by estimating the average per capita consumption in the districts of Mozambique. Finally, Section \ref{sec:finalremarks} concludes the paper and outlines directions for future research.

\section{Data} \label{sec:data}

This study relies on the Inquérito sobre Orçamento Familiar (IOF, National Family Budget Survey) 2019/20, implemented by the Instituto Nacional de Estatística (INE) of Mozambique and made available through the World Bank. The IOF 2019/20 is a nationally representative household survey designed to measure household income, expenditure, consumption, and related socio-economic characteristics. It plays a central role in monitoring poverty and welfare dynamics in Mozambique, constitutes the basis for official poverty estimates, and is widely used in both research and policy analysis \citep{IOF}. 

Administratively, Mozambique is divided into 11 provinces, which represent the highest level of sub-national government. These provinces are further subdivided into districts. Historically, the country was composed of 154 districts, but following recent administrative reforms and the creation of new units, the 2017 Population and Housing Census reported a total of 161 districts \citep{INE2021}. Many household surveys, however, continue to adopt the earlier configuration of 154 districts, often excluding several in Cabo Delgado because of conflict-related inaccessibility \citep{ACAPS2023}.

Districts constitute the key level for planning and implementing public policies in Mozambique, as well as for monitoring poverty, consumption, and development outcomes. They function as development poles with dedicated funding, district development plans, and participatory monitoring and evaluation systems in place \citep{Massuanganhe2005}. In particular, our aim is to estimate the average per capita consumption expenditure at the district level.

Nevertheless, the IOF 2019/20 is designed to produce reliable estimates only at the national, urban/rural, and provincial levels. Districts are not planned domains of the survey, and direct estimation at this level is not feasible due to limited sample sizes. 
The final dataset comprises 13,302 individuals clustered within 154 districts. District-level sample sizes vary considerably, from a minimum of 1 to a maximum of 666 observations, with a median of 49.
As a preliminary step towards applying an SAE approach, we computed direct estimates of average per capita consumption expenditure at the district level. The standard errors of the direct estimator were obtained via analytical approximation using the function \texttt{direct} from the \texttt{sae} package in \texttt{R} \citep{molina-marhuenda:2015}.

The coefficients of variation (CVs) of the direct estimator are reported in Table \ref{tab.dir}. The classification of estimate reliability follows Statistics Canada guidelines: estimates with CV $< 16.6\%$ are considered reliable for general use; those with $16.6\% < \text{CV} \leq 33.3\%$ should be accompanied by cautionary notes; and those with CV $> 33.3\%$ are deemed unreliable \citep{canada}. Notably, in 19 out of 154 districts the CV exceeds 33.3\%, while in nearly half of the districts the CV lies between 16.6\% and 33.3\%. These results clearly highlight the necessity of adopting a SAE approach to reduce variability and improve the precision of district-level estimates.

\begin{table}[H]\small
	\caption{\label{tab.dir} Number of areas by CVs of the direct estimates.} \centering
	\begin{tabular}{lrrr}
		\toprule
		&\multicolumn{1}{c}{$CV < 16.6\%$} & \multicolumn{1}{c}{$16.6\% < CV < 33.3\%$} & \multicolumn{1}{c}{$CV > 33.3\%$}\\	
  \midrule
  Direct Est. &62&73&19\\
\bottomrule
\end{tabular}
\end{table}

As auxiliary information, we have access to data from the 2017 General Population and Housing Census. This dataset comprises 69 variables describing household composition, individual socio-demographic characteristics, and housing equipment, which constitute a valuable source of potential auxiliary variables for enhancing the SAE model. Consequently, an appropriate variable selection or dimension-reduction strategy is required to address the substantial redundancy among covariates. Indeed, the auxiliary variables exhibit block-wise correlation structures, as illustrated in Figure \ref{fig.corrplot}.

\begin{figure} [h!]
	\centering
	{\includegraphics[width=0.8\textwidth]{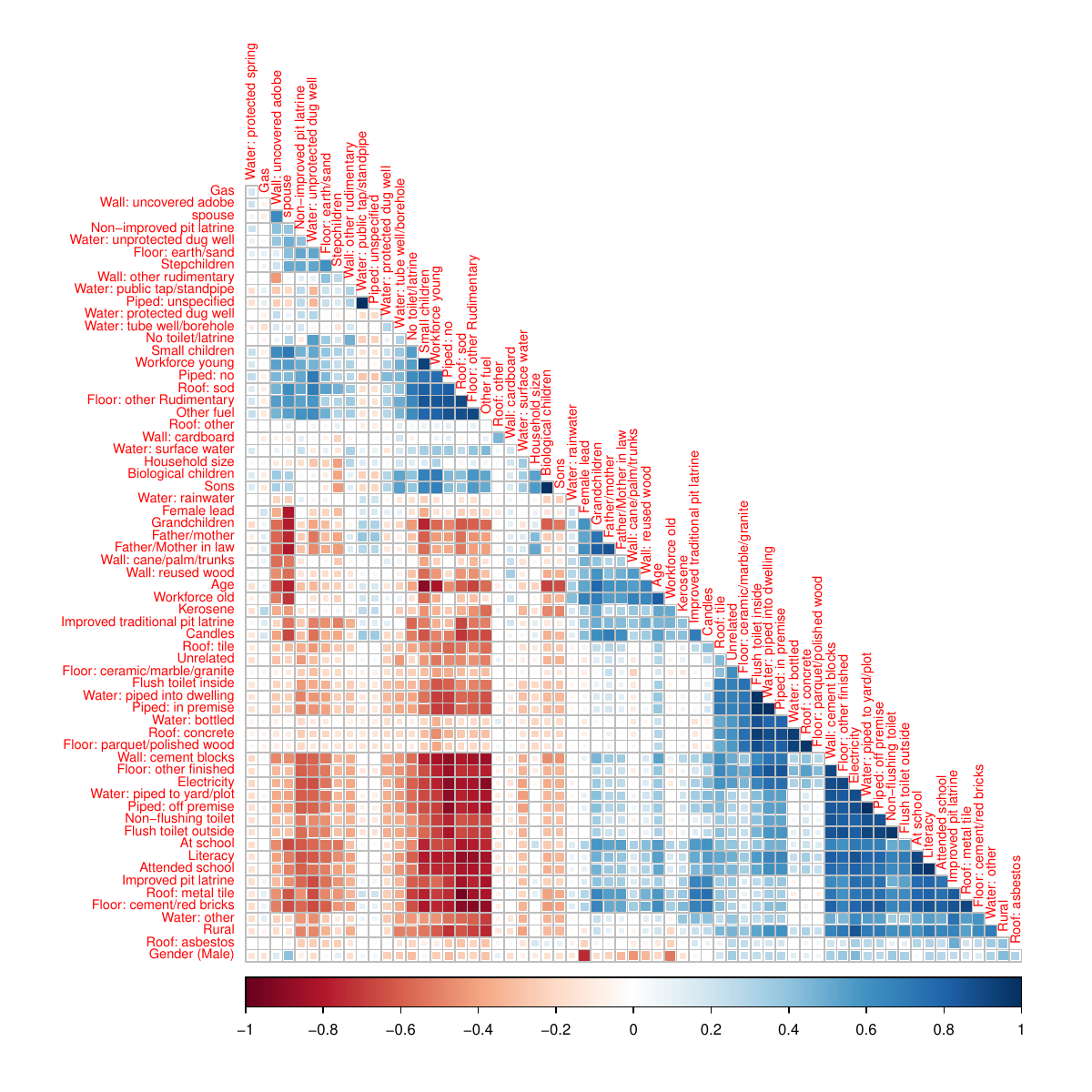}}	
	\caption{Correlation plot of the auxiliary variables from the Mozambique Census data.}
	\label{fig.corrplot}
\end{figure}

\section{Methodology} \label{sec:method}
In the following, we present the statistical methodology in which we combine a standard FH area-level model with the PLS procedure for dimension reduction. 

A finite population of size $N$ is assumed to be partitioned into $i = 1, 2, \ldots, m$ non-overlapping domains (i.e., small areas). The population size in each domain is equal to $N_i$. A sample of size $n_i$ is randomly drawn from the population values of the small area $i$ so that $n=\sum_{i=1}^{m}n_i$. 


We assume that, for each area of interest $i$ ($i=1,2,\ldots,m$), a direct survey estimator $\bar{Y}_{i}$ of the mean is available. Such estimators are typically design-based and rely solely on the sample information collected within the area.

The FH implies that a vector of $p$ auxiliary variables, $\XX_i=(X_{i1}, X_{i1},...X_{ip})^T$, and the parameters of interest, $\bar{Y}_i$, are linked by a certain relationship:
	
		\begin{equation} \label{areaRE}
		\bar{Y}_{i}=\XX_{i}^{T} \bbeta + u_i, \quad i=1,...m,
		\end{equation}
		
where $\bbeta$ is the vector of the regression coefficients
and $u_i$'s are area-specific random effects assumed to be independent and identically distributed with mean 0 and variance $\sigma_u^2$. 

Moreover, we assume that the following model holds for design unbiased direct estimator $\hat{Y}_i$:

		\begin{equation} \label{samplingmodel}
		\hat{\bar{Y}}_i=\bar{Y}_{i} + e_i, 
		\end{equation}
where $e_i$'s are the sampling errors in the $i^{th}$ area with $E(e_i|\bar{Y}_i) = 0$ and
$V(e_i|\bar{Y}_i) = \sigma_{e_i}^2$. 

Combining Equation \ref{areaRE} and Equation \ref{samplingmodel} we obtain the FH linear mixed model:

		\begin{equation} \label{FH}
		\hat{\bar{Y}}_i=\XX_{i}^{T} \bbeta + u_i + e_i. 
		\end{equation}

The Best Linear Unbiased Predictor of $\bar{Y}_i$ under the FH model is given by:

\begin{equation}
\hat{\bar{Y}}_i^{FH}=\gamma_i \hat{\bar{Y}}_{i} + (1-\gamma_i)  (\XX_{i}^{T} \bhat + u_i),
\end{equation}
where $\bhat$ is the weighted least square estimator of $\bbeta$ and $\gamma_i=\sigma_u^2/( \sigma_u^2 +\sigma_{e_i}^2)$, called the \textit{shrinkage factor}, measures the model variance relative to the total variance. Replacing $\sigma_u^2$ with a consistent estimator $\hat{\sigma}_u^2$ we obtain the EBLUP of $Y_i$.

However, in the context of high-dimensional auxiliary data and the presence of highly correlated variables, this model could be inefficient or impossible to estimate. As anticipated, a possible solution could be to use a dimension reduction method as the PLS.



The PLS methodology assumes the existence of a latent structure
\[ \mathbf{X} = \mathbf{T}\mathbf{P}^\top + \mathbf{G}, 
\qquad 
\mathbf{y} = \mathbf{T}\mathbf{b} + \mathbf{f},\]
where $\mathbf{X}\in\mathbb{R}^{l\times p}$ is the matrix of predictors, 
$\mathbf{y}\in\mathbb{R}^{l\times 1}$ is the univariate response,  $\mathbf{T}\in\mathbb{R}^{l\times H}$ is the score matrix formed by $H$ linear combinations of the original variables, 
$\mathbf{P}\in\mathbb{R}^{p\times H}$ is the loading matrix for $\mathbf{X}$,  $\mathbf{b}\in\mathbb{R}^{H\times 1}$ is the vector of regression coefficients,  and $\mathbf{G},\mathbf{f}$ are the corresponding error terms.  The decomposition is constructed to maximize the covariance between each latent component and the response.


We assume that both the predictors $\mathbf{X}$ and the response $\mathbf{y}$ are mean-centred (and possibly standardized). Let $\mathbf{X}_0=\mathbf{X}$ and $\mathbf{y}_0=\mathbf{y}$. 
For $h=1,\ldots,H$, the standard PLS algorithm proceeds iteratively as follows:

\begin{enumerate}
    \item Compute the weight vector
    \[
    \mathbf{w}_h = \frac{\mathbf{X}_{h-1}^\top \mathbf{y}_{h-1}}
    {\left\| \mathbf{X}_{h-1}^\top \mathbf{y}_{h-1} \right\|}, 
    \qquad \mathbf{w}_h \in \mathbb{R}^{p\times 1}.
    \]
    This standardized vector defines the direction in the predictor space along which the covariance between $\mathbf{X}$ and $\mathbf{y}$ is maximized. 
    It can therefore be interpreted as the linear combination of predictors that is most correlated with $\mathbf{y}$.
    \item Compute the PLS component 
    \[
    \mathbf{t}_h = \mathbf{X}_{h-1}\mathbf{w}_h, 
    \qquad \mathbf{t}_h \in \mathbb{R}^{l\times 1}.
    \]
    The latent component $\mathbf{t}_h$ is the projection of the predictor matrix onto the direction $\mathbf{w}_h$, forming a linear combination of covariates that best explains the variation in $\mathbf{y}$
    \item Calculate the $X$-loading vector as
    \[
    \mathbf{p}_h = \frac{\mathbf{X}_{h-1}^\top \mathbf{t}_h}{\mathbf{t}_h^\top \mathbf{t}_h}, 
    \qquad \mathbf{p}_h \in \mathbb{R}^{p\times 1}.
    \]
    The loading vector $\mathbf{p}_h$ measures how strongly each original variable contributes to the latent component $\mathbf{t}_h$.
    \item Calculate the $y$-loading for $\mathbf{y}$ as
    \[
    b_h = \frac{\mathbf{t}_h^\top \mathbf{y}_{h-1}}{\mathbf{t}_h^\top \mathbf{t}_h}, 
    \qquad b_h \in \mathbb{R}.
    \]
    The scalar $b_h$ expresses the regression coefficient of the current component $\mathbf{t}_h$ in the approximation of $\mathbf{y}$.
    \item Deflate the predictors and the response:
    \[
    \mathbf{X}_h = \mathbf{X}_{h-1} - \mathbf{t}_h \mathbf{p}_h^\top, 
    \qquad \mathbf{X}_h \in \mathbb{R}^{l\times p},
    \]
    \[
    \mathbf{y}_h = \mathbf{y}_{h-1} - b_h \mathbf{t}_h, 
    \qquad \mathbf{y}_h \in \mathbb{R}^{l\times 1}.
    \]
    This step removes from both $\mathbf{X}$ and $\mathbf{y}$ the information  already captured by the extracted component, ensuring that subsequent components explain new and uncorrelated sources of variation.
\end{enumerate}
After $H$ steps, collecting $\mathbf{W} = [\mathbf{w}_1,\ldots,\mathbf{w}_H]\in\mathbb{R}^{p\times H}$, 
$\mathbf{P} = [\mathbf{p}_1,\ldots,\mathbf{p}_H]\in\mathbb{R}^{p\times H}$ 
and $\mathbf{b}=(b_1,\ldots,b_H)^\top\in\mathbb{R}^{H\times 1}$, 
the estimated regression coefficients for the relation $\mathbf{y}\approx \mathbf{X}\hat{\boldsymbol\beta}$ are obtained as
\[
\hat{\boldsymbol\beta} = \mathbf{W}\,(\mathbf{P}^\top \mathbf{W})^{-1}\,\mathbf{b}, 
\qquad \hat{\boldsymbol\beta}\in\mathbb{R}^{p\times 1}.
\]






Our strategy is to retain the same algorithmic structure of the standard PLS procedure, while explicitly accounting for the hierarchical structure of the data. To this end, we replace the computation of the covariance between the outcome and the auxiliary variables with the adjusted $R^2$ measure for Fay--Herriot (FH) models proposed by \cite{lahiri2015}, and we compute the $y$-loadings in step 4 by fitting a FH model.

Let $\mathbf{X}_0=\mathbf{X}\in\mathbb{R}^{m\times p}$ denote the matrix of auxiliary variables and $\bar{Y}_0=\bar{Y}\in\mathbb{R}^{m\times 1}$ the vector of direct estimators. For $h=1,\ldots,H$ Partial Fay--Herriot (PFH) algorithm proceeds as follows:

Let $\mathbf{X}_0=\mathbf{X}\in\mathbb{R}^{m\times p}$ denote the matrix of auxiliary variables and $\bar{Y}_0=\bar{Y}\in\mathbb{R}^{m\times 1}$ the vector of direct estimators. For $h=1,\ldots,H$, the Partial Fay--Herriot (PFH) algorithm proceeds as follows:

\begin{enumerate}
\item Compute the weight vector $\mathbf{s}_h = (s_{h1}, \ldots, s_{hp})^\top \in \mathbb{R}^{p\times 1}$, 
	where each element $s_{hj}$ corresponds to the adjusted $R^2$ statistic from a univariate Fay--Herriot model
	fitted using $x_j$ as the only covariate:
	\[
	s_{hj} = 1 - \frac{h(\text{MSE}_j, \bar{D}_w)}{h(\text{MST}_j, \bar{D})},
	\qquad
	h(x,b) = \frac{2x}{1+\exp(2b/x)},
	\]
	with $\text{MSE}_j$ and $\text{MST}_j$ denoting, respectively, the mean squared error and mean squared total
	from the corresponding regression, $\bar{D}$ the average of the sampling variances, and $\bar{D}_w$
	their leverage-weighted mean as defined in \citet{lahiri2015}. The resulting vector $\mathbf{s}_h$ reflects the explanatory power of each auxiliary variable in the FH context. Finally, standardize to obtain
	\[
	\mathbf{w}_h = \frac{\mathbf{s}_h}{\|\mathbf{s}_h\|}, \qquad \mathbf{w}_h\in\mathbb{R}^{p\times 1}.
	\]
    \item Compute the PFH component
    \[
    \mathbf{t}_h = \mathbf{X}_{h-1}\mathbf{w}_h, \qquad \mathbf{t}_h\in\mathbb{R}^{m\times 1}.
    \]
    \item Compute the $X$-loading vector
    \[
    \mathbf{p}_h = \frac{\mathbf{X}_{h-1}^\top \mathbf{t}_h}{\mathbf{t}_h^\top \mathbf{t}_h}, 
    \qquad \mathbf{p}_h \in \mathbb{R}^{p\times 1}.
    \]
    \item Fit a FH model with $\bar{Y}_{h-1}$ as the response and $\mathbf{t}_h$ as covariate, obtaining the $Y$-loading 
    \[
    b_h \in \mathbb{R},
    \]
    and the corresponding vector of estimated random effects 
    \[
    \mathbf{u}_h\in\mathbb{R}^{m\times 1}.
    \]
    \item Deflate the predictors and the response as
    \[
    \mathbf{X}_h = \mathbf{X}_{h-1} - \mathbf{t}_h \mathbf{p}_h^\top, 
    \qquad \mathbf{X}_h\in\mathbb{R}^{m\times p},
    \]
    \[
    \bar{Y}_h = \bar{Y}_{h-1} - b_h \mathbf{t}_h - \mathbf{u}_h,
    \qquad \bar{Y}_h\in\mathbb{R}^{m\times 1}.
    \]
\end{enumerate}



At each iteration, the PFH algorithm identifies a latent direction that maximizes the explanatory capacity of the covariates under the FH model rather than the simple covariance with $\bar{Y}$ as in standard PLS. This allows the method to incorporate the hierarchical structure of the data and the sampling variances.

After extracting $H$ components, collected in $\mathbf{T}=[\mathbf{t}_1,\ldots,\mathbf{t}_H]\in\mathbb{R}^{m\times H}$, we fit a FH model with $\mathbf{T}$ as covariates to obtain the Partial Fay--Herriot (PFH) estimator

\begin{equation} \label{pfh_model}
   \hat{\bar{Y}}_i^{PFH} = \hat{\phi}_i \hat{\bar{Y}}_i + (1-\hat{\phi}_i)\bigl(\mathbf{T}_{i}^{\top}\hat{\boldsymbol\beta}^{PFH} + \hat{u}_i^{PFH}\bigr), 
\end{equation}
where $\hat{\boldsymbol\beta}^{PFH}\in\mathbb{R}^{H\times 1}$ and $\hat{u}_i^{PFH}$ are the regression coefficients and random effects obtained from the FH model with the PFH components as covariates, and $\hat{\phi}_i$ is the estimated shrinkage factor.



Design consistency is an important property for small area estimators as it guarantees that for large area-specific sample sizes, the estimator converges to the direct estimator, thereby offering a form of protection against model failure. We adopt the definition of design consistency due to    \cite{isaki1982survey}: an estimator sequence $(\widehat{\theta}_k)$ is said to be design‑consistent for $\theta$ if
\[
\hat\theta_k \xrightarrow{P_{p_k}} \theta_k 
\qquad \text{as } n_k \to \infty,
\]
that is, $
P_{p_k}\left( \, |\hat\theta_k - \theta| > \varepsilon \, \right) 
\to 0$, for every $\varepsilon > 0$.

For this definition to make sense in the finite population context, we have to posit the existence of a sequence of designs $p_k$ and finite populations $U_k$ of increasing size in a relationship endowed with regularity conditions including the strict positivity of first order inclusion probabilities, the uniform boundedness of second order inclusion probabilities, a \textit{stable} sampling rate, possibly going to 0 but not extremely fast. \citep[See][for a formal definition]{isaki1982survey}. The convergence is along the sequence of designs and population sizes. These technical conditions are met when the actual design is well behaved without extreme weights, points with extremely large leverage and bounded cluster sizes. 
\begin{proposition}
\label{prop:PFH_design}
Let $\hat{\bar{Y}}_i^{PFH}$ denote the PFH estimator defined in \eqref{pfh_model}. 
We assume that the sampling design satisfies the conditions in \cite{isaki1982survey}, the number of clusters grows with the sample size but cluster sizes remain uniformly bounded (i.e. there exist constants $N_{min}, N_{max}$, $0 < N_{min} < N_{max} < N_k$ such that $N_{min}\leq N_{ik} \leq N_{max}$ $\forall i$ in sub-population $U_k$). We then assume that as $n \to \infty$ also $n_i \to \infty$. As a consequence we have
\[
\hat{\bar{Y}}_i^{PFH} \xrightarrow{p} \bar{Y}_i \quad \text{as } n \to \infty,
\]
that is, the PFH estimator is design-consistent.
\end{proposition}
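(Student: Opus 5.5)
The plan is to decompose the PFH estimator in \eqref{pfh_model} as
\[
\hat{\bar{Y}}_i^{PFH}-\bar{Y}_i=\bigl(\hat{\bar{Y}}_i-\bar{Y}_i\bigr)+(1-\hat{\phi}_i)\bigl(\mathbf{T}_i^{\top}\hat{\boldsymbol\beta}^{PFH}+\hat{u}_i^{PFH}-\hat{\bar{Y}}_i\bigr).
\]
It then suffices to prove two facts. First, the direct estimator is design-consistent, $\hat{\bar{Y}}_i-\bar{Y}_i\xrightarrow{p}0$. Second, $1-\hat{\phi}_i\xrightarrow{p}0$ while the bracketed synthetic-minus-direct term stays bounded in probability, i.e. is $O_p(1)$. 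Slutsky's lemma then yields the claim.

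The first fact follows from the regularity conditions of \cite{isaki1982survey} assumed in Proposition \ref{prop:PFH_design}. With inclusion probabilities bounded away from zero, bounded second-order inclusion probabilities and uniformly bounded cluster sizes $N_{min}\le N_{ik}\le N_{max}$, the design variance of the Horvitz--Thompson (or H\'ajek) mean in area $i$ is $O(n_i^{-1})$. Chebyshev's inequality then gives consistency as $n_i\to\infty$; for the ratio-type H\'ajek mean a standard linearization argument applies.

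For the shrinkage factor, $\hat{\phi}_i=\hat{\sigma}_u^2/(\hat{\sigma}_u^2+\sigma_{e_i}^2)$, where $\sigma_{e_i}^2=O(n_i^{-1})\to0$ by the same variance bound. The key point is that $\hat{\sigma}_u^2$, obtained from the final FH fit on $\mathbf{T}$, is built from all $m$ areas. When a single $n_i$ grows, $\hat{\sigma}_u^2$ converges to a strictly positive limit; alternatively one can argue that it is bounded below in probability, which holds if we adopt a truncation of the type $\max(\hat{\sigma}_u^2,\epsilon_m)$ commonly used with Fay--Herriot variance estimators. Under this condition $1-\hat{\phi}_i=\sigma_{e_i}^2/(\hat{\sigma}_u^2+\sigma_{e_i}^2)\le\sigma_{e_i}^2/\hat{\sigma}_u^2=O_p(n_i^{-1})$.

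The main obstacle is the $O_p(1)$ bound on $\mathbf{T}_i^{\top}\hat{\boldsymbol\beta}^{PFH}+\hat{u}_i^{PFH}$, because $\mathbf{T}$ is data-driven. The weights $\mathbf{w}_h$ are normalized, so $\|\mathbf{w}_h\|=1$. Each deflation step is a projection, which implies $\|\mathbf{X}_h\|\le\|\mathbf{X}_{h-1}\|$ for fixed census covariates. Hence every component $\mathbf{t}_h$ has norm bounded by $\|\mathbf{X}\|$, uniformly in the sample, whatever the adjusted $R^2$ values are. The remaining quantities, $\hat{\boldsymbol\beta}^{PFH}$ and $\hat{u}_i^{PFH}$, are continuous functions (weighted least squares and BLUP formulas) of the direct estimates and sampling variances of all areas. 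Those inputs converge or stay bounded in probability, so these quantities are $O_p(1)$ provided $\mathbf{T}^{\top}\mathbf{T}$ stays nonsingular, which holds when $H$ does not exceed the rank of $\mathbf{X}$. If the continuity argument proves delicate near degenerate components, I would first try conditioning on $\mathbf{T}$: the bound $(1-\hat{\phi}_i)=O_p(n_i^{-1})$ holds uniformly over $\mathbf{T}$, so any fixed-$\mathbf{T}$ argument extends to the random $\mathbf{T}$.
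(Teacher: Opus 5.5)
Your proposal is correct and follows the same route as the paper's proof: $\sigma^2_{e_i}\to 0$ forces $\hat{\phi}_i\to 1$, so the PFH estimator inherits design consistency from the direct estimator. Your decomposition makes explicit two facts the paper uses only tacitly, namely that $\hat{\sigma}_u^2$ stays bounded away from zero in probability (the paper simply asserts $\hat{\sigma}_{u^{PFH}}^2\to\sigma_u^2$) and that the synthetic term is $O_p(1)$. For the latter you need neither nonsingularity of $\mathbf{T}^{\top}\mathbf{T}$ (which $H\le\mathrm{rank}(\mathbf{X})$ does not guarantee, since the adjusted-$R^2$ weights $\mathbf{w}_h$ need not lie in the row space of $\mathbf{X}_{h-1}$) nor boundedness of $\hat{\boldsymbol\beta}^{PFH}$, and hence no conditioning on $\mathbf{T}$: the synthetic fit is the $i$-th coordinate of a weighted projection of the vector of direct estimates onto the column space of $[\mathbf{1},\mathbf{T}]$, so for every $\mathbf{T}$ it is bounded by $(1+\sigma_{e_i}^2/\hat{\sigma}_u^2)^{1/2}$ times that vector's Euclidean norm, and $|\hat{u}_i^{PFH}|$ never exceeds the corresponding residual.
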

\begin{proof}
The PFH estimator is given by
\[
\hat{\bar{Y}}_i^{PFH} 
= \hat{\phi}_i \hat{\bar{Y}}_{i} 
+ (1-\hat{\phi}_i)\bigl(\TT_i^{\top}\hat{\beta}^{PFH} + \hat{u}_i^{PFH}\bigr),
\qquad 
\hat{\phi}_i = \frac{\hat{\sigma}_{u^{PFH}}^2}{\hat{\sigma}_{u^{PFH}}^2 + \sigma_{e_i}^2}.
\]
As the within-area sample size $n_i \to \infty$, the sampling variance $\sigma^2_{e_i} \to 0$, so that $\hat{\phi}_i \to 1$, while $\hat{\sigma}_{u^{PFH}}^2 \to \sigma^2_{u}$. 
Hence,
\[
\hat{\bar{Y}}_i^{PFH} \;\to\; \hat{\bar{Y}}_i.
\]
Since the direct estimator $\hat{\bar{Y}}_i$ is design-consistent for $\bar{Y}_i$, 
it follows that $\hat{\bar{Y}}_i^{PFH}$ is also design-consistent.
\end{proof}

\subsection{Parametric bootstrap MSE estimation}
\label{sec:bootstrap}

We propose a parametric bootstrap MSE estimator for the PFH predictor, following the general approach developed for EBLUP-type small area estimators by \citet{butar2003} and \citet{gonzalezmanteiga2008}. At each replicate, the bootstrap re-estimates the entire PFH procedure, both the component weights and the subsequent Fay--Herriot regression, so that the resulting variability reflects the uncertainty arising from both estimation stages.

For a given number of components $H$, the algorithm proceeds as follows.

\begin{enumerate}
\item Fit the PFH model with $H$ components on the observed data to obtain $\hat{\bbeta}_H$ and $\hat\sigma^2_{u,H}$, taken as the population values for the bootstrap, together with the components $\TT_H$ extracted from the original auxiliary variables $\XX$.
\item For $b = 1,\ldots,B$:
\begin{enumerate}
\item generate bootstrap area effects $u_i^{*} \sim N(0,\hat\sigma^2_{u,H})$ and set $\theta_i^{*} = [1,\TT_{H,i}]\hat{\bbeta}_H + u_i^{*}$, where $\TT_{H,i}$ denotes the $i$-th row of $\TT_H$;
\item generate bootstrap direct estimates $y_i^{*} = \theta_i^{*} + e_i^{*}$, with $e_i^{*}\sim N(0,\sigma^2_{e_i})$ and $\sigma^2_{e_i}$ the original sampling variances;
\item using $(y^{*},\XX)$ only, re-run the full PFH procedure, re-extracting $H$ components and refitting the Fay--Herriot model, to obtain the bootstrap predictor $\hat\theta_i^{*}$.
\end{enumerate}
\item estimate the MSE as
\[
\widehat{\mathrm{MSE}}^{boot}_i = \frac{1}{B}\sum_{b=1}^{B}\left(\hat\theta_{i,b}^{*}-\theta_{i,b}^{*}\right)^2 .
\]
\end{enumerate}

Because $\theta_i^{*}$ is known by construction at each replicate, step 3 provides a direct, model-consistent measure of prediction error. 

We note that our construction is a single-level (naive) parametric bootstrap: $\hat{\bbeta}_H$ and $\hat\sigma^2_{u,H}$ are themselves treated as the true population values, without propagating their own estimation uncertainty; a double bootstrap correction \citep{hallmaiti2006} would in principle refine this remaining source of variability, at a substantially higher computational cost. A formal theoretical comparison of this estimator with alternative variance-estimation approaches, together with the double bootstrap extension, is left for future research. More broadly, existing analytical variance estimators proposed for PLS typically treat the estimated components as fixed once extracted \citep[see, e.g.,][]{martinez2018}, and by construction leave out the part of the variability associated with estimating those components; the bootstrap approach proposed here avoids this simplification by design.

\section{Simulation study} \label{sec:simul}

The simulation study is based on the framework of \citet{bazzoli2023}.

We consider $m = 80$ areas of interest and three scenarios with $p = 200$, $p = 500$, and $p = 800$ auxiliary variables, respectively. For each scenario, the auxiliary variables $\XX = (\XX_1, \XX_2, \XX_3, \XX_4)$ are partitioned into four blocks of sizes $0.40p$, $0.40p$, $0.10p$ and $0.10p$. Each row of block $\XX_k$, $k=1,\ldots,4$, is generated independently from a multivariate normal distribution with an autoregressive-type covariance structure, $\{\VV^k_{\XX}\}_{ij} = c_k \rho^{|i-j|}$, where $\rho = 0.9$ and $c_1 = 8$, $c_2 = 4$, $c_3 = 2$, $c_4 = 1$. Blocks 1 and 2 therefore contain the variables with the strongest marginal variance, while blocks 3 and 4 contain comparatively lower-variance variables.

The regression coefficients are set as $\bbeta = \{10, \{0\}_{0.40p}, \{0\}_{0.40p}, \{0.5\}_{0.10p}, \{0.5\}_{0.10p}\}$: all predictors in blocks 1 and 2 are noise variables with zero coefficients, while all predictors in blocks 3 and 4 are informative, with coefficient $0.5$. As a result, the blocks carrying no information about the response are precisely those with the highest variance, making this design deliberately unfavourable to unsupervised dimension-reduction methods such as PCA, which are driven only by the marginal variance of the predictors and not by their association with the response \citep{bazzoli2023}.

True area-level values are generated as $\theta_i = 10 + \XX_i^\top \bbeta + u_i$, $u_i \sim N(0, \sigma_u^2)$, where $\sigma_u^2 = \frac{\bar\sigma^2_{e}}{\mathit{TAU}}$ and $\bar\sigma^2_{e} = \frac{\sigma^2_{e_i}}{\mathit{SNR}^2}$,   with heteroskedastic sampling variances defined as
\(
\sigma^2_{e_i} = 2 \times \left(1 + \frac{i-1}{m-1}\right)
\), $\mathit{SNR} = 3$ and $\mathit{TAU} = 4$, following the same signal-to-noise and variance-ratio parametrization adopted by \citet{bazzoli2023}.

Unbiased direct estimators are then obtained as
\begin{equation}\label{dir.sim}
\hat{\theta}^{Dir}_i = \theta_i + e_i,
\end{equation}
where $e_i \sim N(0, \sigma^2_{e_i})$, where $\sigma^2_{e_i}$ has been rescaled to have mean $\bar\sigma^2_e$. 

We computed four different predictors: the Direct Estimator (DIR); the Partial Fay--Herriot model (PFH); the empirical best linear unbiased predictor (EBLUP) under the standard FH model, where covariates were selected through forward stepwise regression; and a variant based on principal components (PCA), in which the FH model is fitted using principal component factors as covariates. The principal component factors were obtained by fitting a principal component regression (PCR) using the \texttt{pcr} function from the \texttt{pls} package in \texttt{R}. To enable a fair comparison between the PFH and PCA approaches, we fixed the number of dimensions included in the models and evaluated their performance across different dimensional settings. Specifically, models were fitted with 1 to 5 latent components.

As performance indicators we compute:

\begin{align*}\label{RBIAS.RMSE}
	\text{Bias}_{i}=\frac{1}{S}\underset{s=1}{\overset{S}{\sum}}\left(\hat\theta_i^{(s)}- \theta_{i}^{(s)} \right),\quad
	\text{MSE}_{i}=\frac{1}{S}\underset{s=1}{\overset{S}{\sum}}\left(\hat\theta_i^{(s)}- \theta_{i}^{(s)} \right)^2.
    \quad 
    \mathrm{Eff}_i = \frac{\mathrm{MSE}_i}{\mathrm{MSE}_i^{Dir}} \times 100\%,
\end{align*}.

Table~\ref{tab:sim_results_m80} reports the median Bias, MSE, and relative efficiency over the areas, for the three scenarios, from 500 Monte Carlo replications.

\begin{table}[htbp]
\centering
\small
\caption{Median over the areas of Bias, MSE, and $\mathrm{Eff}$ (\%) relative to the direct estimator, from 500 Monte Carlo simulations under the three scenarios with $m=80$ and $p=200,500,800$.}
\begin{tabular}{lccccccccc}
\toprule
 & \multicolumn{3}{c}{$p = 200$} & \multicolumn{3}{c}{$p = 500$} & \multicolumn{3}{c}{$p = 800$} \\
\cmidrule(lr){2-4} \cmidrule(lr){5-7} \cmidrule(lr){8-10}
 & Bias & MSE & Eff.(\%) & Bias & MSE & Eff.(\%) & Bias & MSE & Eff.(\%) \\
\midrule
Dir     &  0.018 & 18.257 & 100.0 &  0.005 & 62.568 & 100.0 & -0.016 & 111.225 & 100.0 \\
EBLUP   &  0.026 & 11.381 &  62.3 &  0.020 & 53.201 &  85.0 &  0.001 & 106.233 &  95.5 \\
\midrule
\multicolumn{10}{l}{PCA} \\
\hspace{3mm}1 component  &  0.003 & 16.299 &  89.3 &  0.007 & 55.004 &  87.9 & -0.051 & 98.443 &  88.5 \\
\hspace{3mm}2 components &  0.008 & 16.153 &  88.5 &  0.012 & 54.826 &  87.6 & -0.074 & 97.766 &  87.9 \\
\hspace{3mm}3 components &  0.022 & 15.925 &  87.2 &  0.022 & 53.799 &  86.0 & -0.030 & 96.544 &  86.8 \\
\hspace{3mm}4 components &  0.012 & 15.425 &  84.5 &  0.001 & 52.996 &  84.7 & -0.075 & 96.357 &  86.6 \\
\hspace{3mm}5 components &  0.013 & 14.704 &  80.5 & -0.004 & 51.750 &  82.7 & -0.108 & 95.318 &  85.7 \\
\midrule
\multicolumn{10}{l}{PFH} \\
\hspace{3mm}1 component  &  0.005 &  8.402 &  46.0 & -0.003 & 37.386 &  59.8 & -0.078 & 74.378 &  66.9 \\
\hspace{3mm}2 components &  0.002 &  7.430 &  40.7 &  0.009 & 35.041 &  56.0 & -0.022 & 71.786 &  64.5 \\
\hspace{3mm}3 components &  0.000 &  7.520 &  41.2 &  0.027 & 35.977 &  57.5 &  0.007 & 72.581 &  65.3 \\
\hspace{3mm}4 components & -0.001 &  7.975 &  43.7 &  0.003 & 36.832 &  58.9 & -0.013 & 76.892 &  69.1 \\
\hspace{3mm}5 components &  0.007 &  8.527 &  46.7 &  0.041 & 38.200 &  61.1 & -0.041 & 79.970 &  71.9 \\
\bottomrule
\end{tabular}
\label{tab:sim_results_m80}
\end{table}

Bias is negligible for all estimators and across all scenarios. The direct estimator has, by construction, $\mathrm{Eff}=100\%$ in every scenario. EBLUP improves on the direct estimator, but its advantage shrinks markedly as $p$ increases: $\mathrm{Eff}$ rises from $62.3\%$ at $p=200$ (a $37.7\%$ MSE reduction) to $95.5\%$ at $p=800$ (only a $4.5\%$ reduction), as forward stepwise selection becomes increasingly unstable when the number of candidate covariates grows relative to the number of areas, typically retaining between 8 and 11 covariates on average, but occasionally more than 30.

PCA-based estimators achieve modest and fairly stable gains over the direct estimator, with $\mathrm{Eff}$ between about $80\%$ and $89\%$ (an $11\%$ to $19\%$ reduction), improving gradually with the number of components. PFH clearly outperforms both EBLUP and PCA in every scenario: with only two components, $\mathrm{Eff}$ is $40.7\%$ at $p=200$, $56.0\%$ at $p=500$, and $64.5\%$ at $p=800$, MSE reductions of $59.3\%$, $44.0\%$ and $35.5\%$ respectively, against an $\mathrm{Eff}$ no lower than $80.5\%$ (a $19.5\%$ reduction) for the best PCA specification (five components) in the corresponding scenario. The point MSE of PFH is minimized with just two latent components in all three scenarios.

The relative advantage of PFH over PCA, while sizeable throughout, narrows as $p$ grows: comparing the best specification of each method (PFH with two components against PCA with five), the MSE reduction of PFH over PCA is $49.5\%$ at $p=200$, $32.3\%$ at $p=500$, and $24.7\%$ at $p=800$. This attenuation is consistent with the increasingly unfavourable ratio between the number of areas and the number of candidate auxiliary variables ($p/m = 2.5, 6.25$ and $10$, respectively): estimating a small number of supervised directions from a growing pool of correlated candidates, using information from only $m=80$ areas, becomes intrinsically more difficult as $p/m$ increases. Even so, PFH retains a clear and non-negligible advantage over both competitors across the whole range of dimensions considered.

Table~\ref{tab:r2_results_m80} reports the $R^2$ of the response explained by the latent components, and Table~\ref{tab:varx_results_m80} the corresponding cumulative share of the variance of the auxiliary variables explained.

\begin{table}[h!]
\centering
\small
\caption{$m=80$. $R^2$ (\%) of the response explained by PCA and PFH components.}
\begin{tabular}{lcccccc}
\toprule
 & \multicolumn{2}{c}{$p = 200$} & \multicolumn{2}{c}{$p = 500$} & \multicolumn{2}{c}{$p = 800$} \\
\cmidrule(lr){2-3} \cmidrule(lr){4-5} \cmidrule(lr){6-7}
Dimensions & PCA & PFH & PCA & PFH & PCA & PFH \\
\midrule
1 &  3.72 & 84.61 &  5.84 & 81.23 &  6.15 & 79.06 \\
2 & 12.11 & 87.45 & 15.78 & 84.88 & 12.91 & 82.99 \\
3 & 22.81 & 88.81 & 25.81 & 87.42 & 21.81 & 86.82 \\
4 & 34.71 & 89.40 & 36.01 & 88.81 & 30.01 & 88.84 \\
5 & 50.23 & 90.23 & 43.43 & 89.58 & 36.23 & 89.56 \\
\bottomrule
\end{tabular}
\label{tab:r2_results_m80}
\end{table}

\begin{table}[h!]
\centering
\small
\caption{$m=80$. Cumulative share (\%) of the variance of $\XX$ explained by PCA and PFH components.}
\begin{tabular}{lcccccc}
\toprule
 & \multicolumn{2}{c}{$p = 200$} & \multicolumn{2}{c}{$p = 500$} & \multicolumn{2}{c}{$p = 800$} \\
\cmidrule(lr){2-3} \cmidrule(lr){4-5} \cmidrule(lr){6-7}
Dimensions & PCA & PFH & PCA & PFH & PCA & PFH \\
\midrule
1 & 11.43 &  6.81 &  6.53 &  4.42 &  5.09 &  3.43 \\
2 & 21.18 & 13.21 & 12.42 &  7.98 &  9.76 &  6.22 \\
3 & 29.67 & 19.51 & 17.82 & 11.53 & 14.12 &  9.10 \\
4 & 37.02 & 25.34 & 22.84 & 14.90 & 18.18 & 11.75 \\
5 & 43.55 & 30.67 & 27.44 & 18.10 & 22.01 & 14.26 \\
\bottomrule
\end{tabular}
\label{tab:varx_results_m80}
\end{table}

PFH achieves a substantially higher $R^2$ than PCA with far fewer components: a single PFH component explains between $79\%$ and $85\%$ of the variability of the response across the three scenarios, whereas PCA requires all five components to explain at most $50\%$. At the same time, PFH systematically explains a smaller share of the total variance of $\XX$ than PCA (Table~\ref{tab:varx_results_m80}). This is expected, and indeed by design: the noise blocks (1 and 2), although unrelated to the response, carry the largest marginal variance in $\XX$, and are precisely the directions an unsupervised method such as PCA is bound to prioritise \citep{bazzoli2023}. The combination of a much higher $R^2$ with a much lower share of $\XX$-variance explained is the clearest evidence of the information efficiency of the supervised approach.

Table~\ref{tab:rb_boot_m80} reports the relative bias (RB) of the bootstrap MSE estimator of Section~\ref{sec:bootstrap}, computed for PFH with $B=200$ replicates.

\begin{table}[h!]
\centering
\small
\caption{$m=80$. Relative bias (\%) of the \emph{bootstrap} MSE estimator ($B=200$).}
\begin{tabular}{lccc}
\toprule
 & $p = 200$ & $p = 500$ & $p = 800$ \\
\midrule
\multicolumn{4}{l}{\textit{PFH}} \\
\hspace{3mm}1 component  &   7.147 &  -0.632 & -4.416 \\
\hspace{3mm}2 components &  -0.658 &   0.826 & -0.564 \\
\hspace{3mm}3 components & -13.263 & -10.350 & -9.351 \\
\hspace{3mm}4 components & -16.195 & -12.684 & -13.767 \\
\hspace{3mm}5 components & -19.902 & -15.747 & -14.943 \\
\bottomrule
\end{tabular}
\label{tab:rb_boot_m80}
\end{table}

Table~\ref{tab:rb_boot_m80} reports the relative bias of the bootstrap MSE estimator for PFH. At two components, the practically relevant choice, since it minimizes the point MSE of PFH in every scenario (Table~\ref{tab:sim_results_m80}), the bootstrap RB is close to zero across all values of $p$ ($-0.7\%$, $+0.8\%$, and $-0.6\%$ for $p=200$, $500$, and $800$ respectively), indicating that the estimator is essentially unbiased where it matters most for this application. A negative bias emerges at higher $H$ and grows in magnitude with the number of components, though somewhat unevenly across $p$: at $p=200$ it goes from about $-13\%$ at three components to about $-20\%$ at five, while the corresponding ranges are smaller at $p=500$ ($-10\%$ to $-16\%$) and $p=800$ ($-9\%$ to $-15\%$). This pattern is consistent with the single-level nature of the bootstrap discussed in Section~\ref{sec:bootstrap}: $\hat{\bbeta}_H$ and $\hat\sigma^2_{u,H}$ are treated as known population values, so the additional uncertainty in their own estimation is not propagated; this omitted source of variability may accumulate as more components are added to the model. 

\section{Estimate average per capita consumption in the districts of Mozambique} \label{sec:appl}

The aim of this section is to apply the PFH approach to estimate average per capita consumption in the districts of Mozambique. Given the asymmetric distribution of per capita consumption expenditure, a logarithmic transformation is adopted (see Figure \ref{fig.hist}).

\begin{figure}[h!]
    \centering
    \subfigure[per capita consumption expenditure]
	{\includegraphics[width=0.45\textwidth]{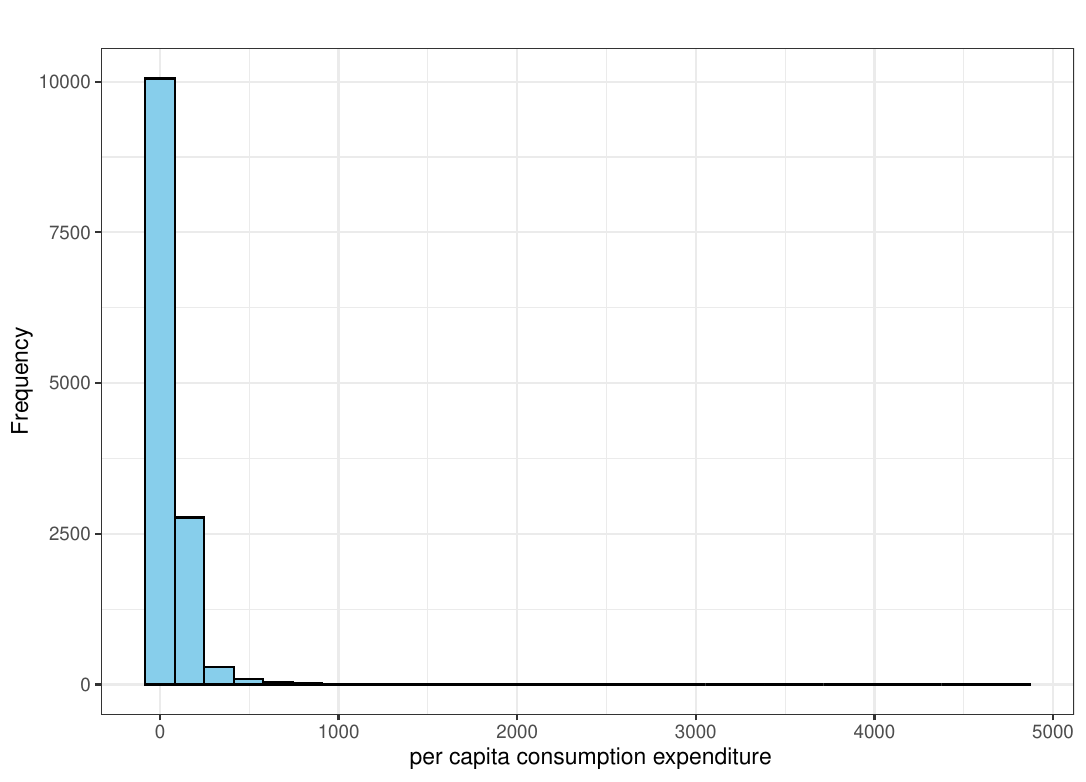}}	
    \hspace{1cm}
	\subfigure[log(per capita consumption expenditure)]
    {\includegraphics[width=0.45\textwidth]{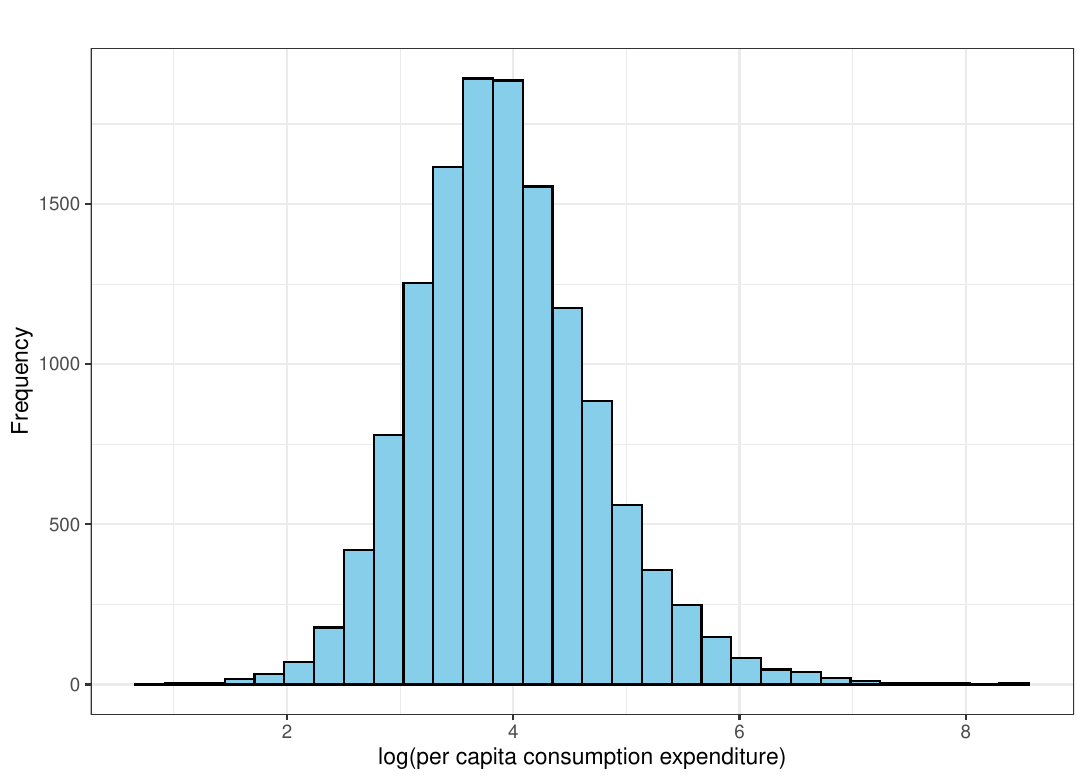}}
	\caption{Histogram of per capita consumption expenditure: (a) raw data and (b) logarithmic transformation. }
	\label{fig.hist}
\end{figure}

Following the evidence presented in the previous section, we employ the PFH methodology with two components to estimate district-level average per capita consumption. 
For comparison, we also consider the standard FH approach, where covariates are selected through a forward stepwise procedure, resulting in a set of ten variables.

The map in Figure \ref{fig.mapPFH} illustrates the spatial distribution of estimated average per capita consumption across the 154 districts of Mozambique.

\begin{figure}[h!]
    \centering
	{\includegraphics[width=0.95\textwidth]{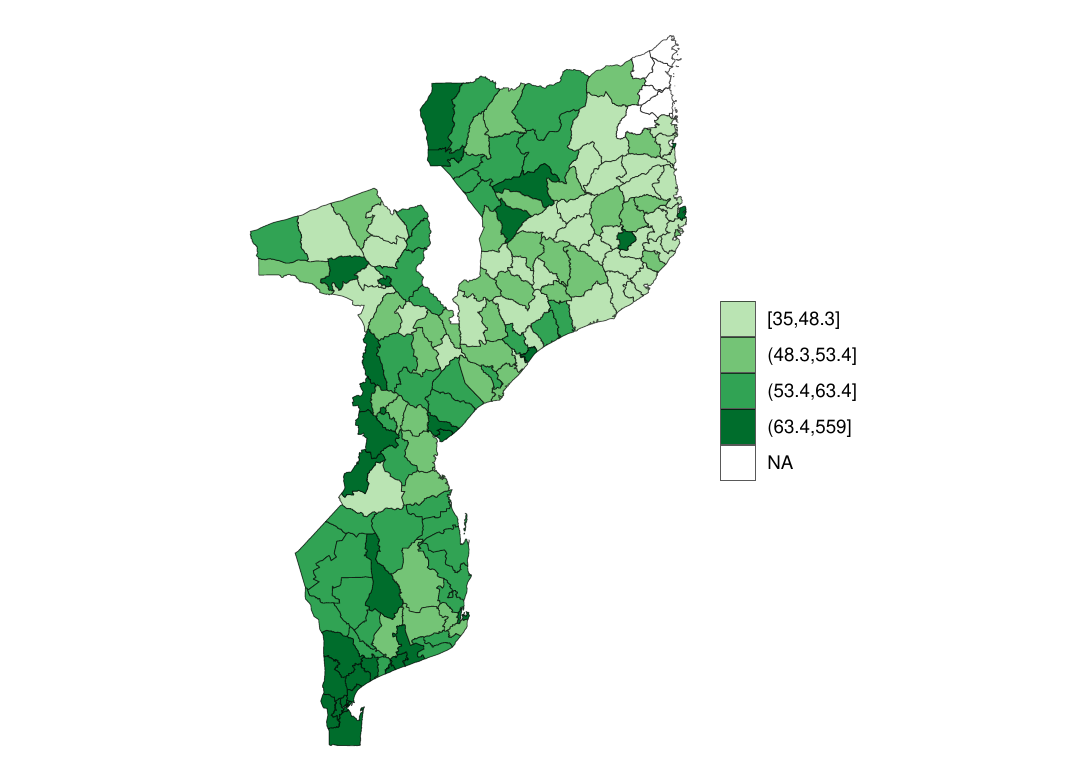}}	
	\caption{Estimated average per capita consumption across the 154 districts of Mozambique obtained using the PFH model.}
	\label{fig.mapPFH}
\end{figure}

The results reveal a clear geographic heterogeneity, with higher levels of per capita income concentrated in the southern districts, particularly around the capital region, and along some coastal areas. In contrast, much of the central and northern regions display lower average consumption, highlighting persistent regional disparities. Some inland districts in the north also register relatively higher values, which may be linked to specific economic activities or resource endowments. The map further indicates that poverty is not uniformly distributed, but instead exhibits clusters of low-income areas adjacent to pockets of relatively higher consumption. This spatial variability underscores the importance of adopting small area estimation techniques such as PFH, which can capture fine-grained differences often masked in aggregate national statistics. From a policy perspective, these results can help identify priority areas for targeted interventions, supporting more effective allocation of resources in poverty reduction programs.

Regarding the precision of the estimates, Table \ref{tab.CVpfh} reports the distribution of the coefficients of variation (CVs) for the PFH predictor. Out of the 154 districts, 128 (83.1\%) display CVs below 16.6\%, while 26 districts fall into the intermediate class between 16.6\% and 33.3\%. Notably, no district records a CV above 33.3\%. These results highlight a substantial improvement over the direct estimator (see Table \ref{tab.dir}).

\begin{table}[H]\small
	\caption{\label{tab.CVpfh} Number of areas by CVs of the direct estimates.} \centering
	\begin{tabular}{lrrr}
		\toprule
		&\multicolumn{1}{c}{$CV < 16.6\%$} & \multicolumn{1}{c}{$16.6\% < CV < 33.3\%$} & \multicolumn{1}{c}{$CV > 33.3\%$}\\	
  \midrule
  PFH Est. &128&26&0\\
\bottomrule
\end{tabular}
\end{table}

Model assumptions were also evaluated. The Q–Q plots in Figure \ref{fig.hist} suggest that the normality assumption is reasonable for both the unit-level errors and the district-level random effects.

\begin{figure}[h!]
    \centering
    \subfigure[]
	{\includegraphics[width=0.45\textwidth]{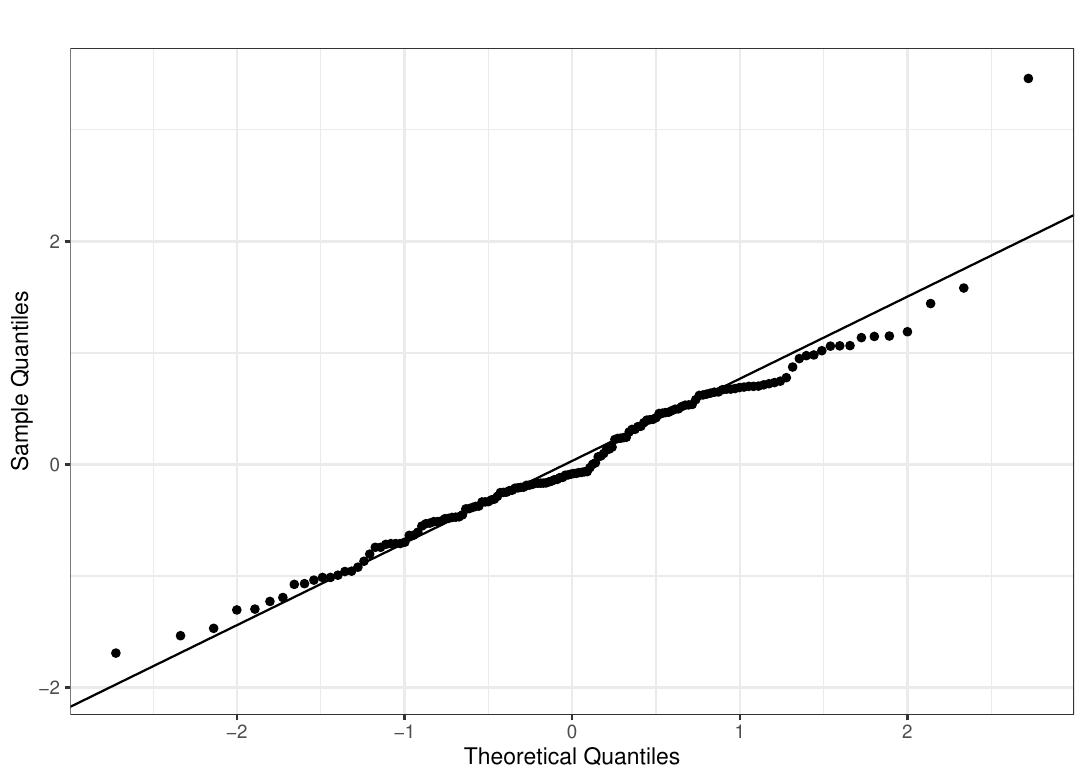}}	
    \hspace{1cm}
	\subfigure[]
    {\includegraphics[width=0.45\textwidth]{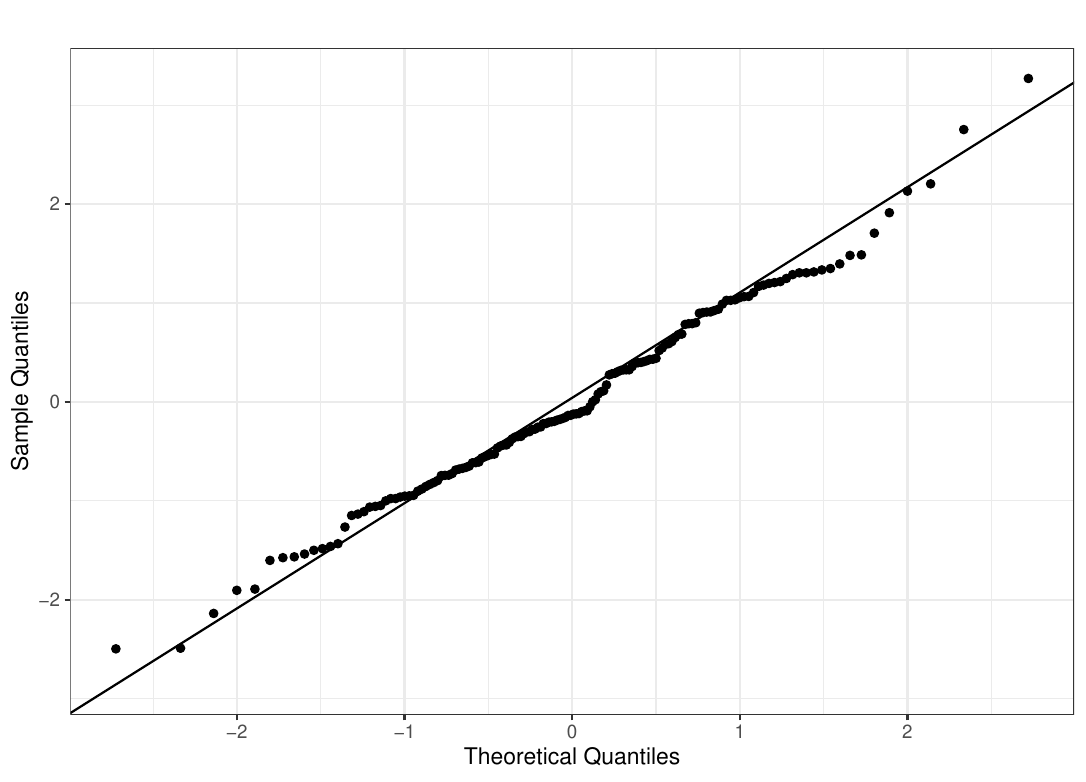}}
	\caption{Normal Q–Q plots for (a) the unit-level errors and (b) the area-level random effects obtained from the fitted PFH model.}
	\label{fig.hist}
\end{figure}

To further validate the model-based predictions, we rely on standard small area estimation diagnostics \citep{brown2001}. The correlation between direct and PFH estimates is 0.94, while the correlation between direct and EBLUP estimates under the FH model is equal to 0.93. Moreover, we computed the Wald-type goodness-of-fit statistic, defined as:

\begin{equation}\label{wald}
	W=\sum\limits_{d=1}^{m} \frac{(\hat{\theta}_i^{dir}-\hat{\theta}_{i}^{mod})^2}
	{{\hat{V}}(\hat{\theta}_i^{dir}) - {mse}(\hat{\theta}_{i}^{mod})},
\end{equation}

where $\hat{\theta}_i^{dir}$ corresponds to the direct estimate of the average consumption per capita in small area $i$; $\hat{V}(\hat{\theta}_i^{dir})$ is the estimated variance of the direct estimator; $\hat{\theta}_{i}^{mod}$ is the model-based estimate of the average income in small area $i$, with ${mse}(\hat{\theta}_{i}^{mod})$ denoting its corresponding estimated MSE.

Under the null hypothesis of equality between direct and model-based estimates, $W$ follows a chi-squared distribution with $m$ degrees of freedom. For PFH, we obtained $W=54.45$ with a $p$-value equal to 1, leading to the conclusion that direct and PFH-based estimates have the same expected value.



To gain further insights into the meaning of the PFH components, we analyzed the loadings reported in Figure \ref{fig.loadings}.

\begin{figure}[h!]
    \centering
	{\includegraphics[width=0.85\textwidth]{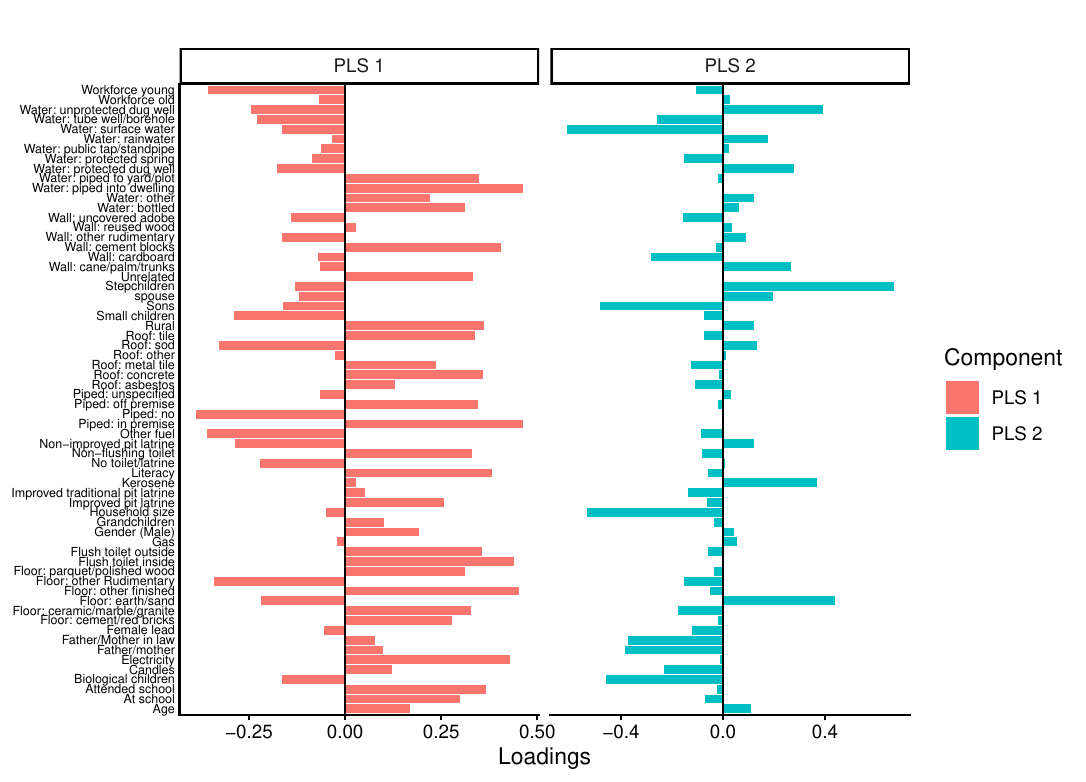}}	
	\caption{Loadings of the PFH components.}
	\label{fig.loadings}
\end{figure}

The first PLS component primarily captures household material well-being and access to essential infrastructure. High positive loadings are observed for piped water within the dwelling or on premises, finished floor materials, flush toilets inside the house, electricity, walls made of cement blocks, as well as literacy and school attendance. Conversely, negative loadings are found for the absence of piped water and rural location. This component aligns with the multidimensional poverty framework that emphasizes education and basic infrastructure services as key dimensions of welfare alongside monetary poverty \citep{d2024multidimensional}. Asset-based indicators such as housing quality and access to services like clean water and electricity have become standard proxies for household socioeconomic status in contexts where income and expenditure data are unreliable \citep{alkire2022revising}, particularly in sub-Saharan Africa where asset indices have been found to correspond well with monetary measures such as expenditure \citep{booysen2008using}. Namely, the DHS Wealth Index, constructed based on summarizing household assets and dwelling characteristics, has been widely adopted as a practical alternative to direct measurement of income or consumption expenditure, and has proven effective in representing long-term economic status \citep{rutstein2008}.

The second component reflects differences in household demographic structure and social vulnerability. It shows positive loadings for the presence of stepchildren, earth floors, unprotected dug wells, and the use of kerosene, whereas negative loadings are linked to larger household size, the presence of biological children, and co-resident parents or parents-in-law. This pattern contrasts socially fragile households -- smaller, with non-biological children and limited resources -- with more stable households characterized by larger size and conventional kinship structures. Research has established that family complexity, including the presence of stepchildren or half-siblings, is negatively associated with child well-being and academic outcomes, with stepchildren experiencing poorer outcomes than those living with only full siblings \citep{halpernmeekin2008}. Family structure instability and non-traditional household compositions have been shown to reproduce poverty through their association with reduced parental resources, increased maternal stress, and ultimately poorer child outcomes \citep{mclanahan2009}. Children living with cohabiting stepparents or relatives face substantially higher poverty rates compared to those living with married stepparents \citep{beegle2010orphanhood}, suggesting that household composition serves as an important marker of economic vulnerability.


Overall, the two PLS components capture two key dimensions of welfare in the Mozambican context: material living conditions and household demographic structure. This approach is consistent with multidimensional definitions of poverty that recognize multiple interlocked dimensions, with lack of access to basic infrastructure being particularly prominent in rural areas \citep{worldbank2018mozambique}. By incorporating both asset-based welfare indicators and family composition variables, the model recognizes that poverty involves being deprived on several fronts that do not necessarily correlate perfectly with monetary wealth \citep{alkire2022revising}, providing a more nuanced assessment of household welfare at the district level.

\section{Final remarks} \label{sec:finalremarks}

The United Nations’ 2030 Agenda for Sustainable Development, with its overarching commitment to leaving no one behind, relies on the availability of disaggregated, timely, and reliable statistics. This requirement is particularly urgent in countries such as Mozambique, where poverty levels vary widely across regions and where targeted local policies are essential to reduce territorial inequalities. SAE methods offer a powerful solution in this context, as they enable the production of reliable estimates for small geographical domains, supporting both policy design and impact evaluation.

However, modern auxiliary data sources, such as census, administrative, or geospatial datasets, often contain a very large number of variables for each area. These predictors are typically highly redundant and strongly correlated, making standard modelling approaches unstable and requiring researchers to adopt techniques capable of variable selection and effective handling of multicollinearity.

In this paper, we propose an automatic SAE approach that integrates PLS within the most common SAE area-level model. Unlike traditional dimensionality-reduction methods such as PCA, PLS constructs components that explicitly take into account their relationship with the target variable. This supervision not only enhances predictive performance but also contributes to the interpretability of the resulting components. The method operates in a fully automatic fashion and provides a transparent structure that helps practitioners understand which combinations of auxiliary variables are most relevant for prediction. Using the 69 covariates available from the General Population and Housing Census of Mozambique, our approach effectively estimates district-level per capita consumption expenditure.

The proposed methodology was validated through a MC simulation study. Results showed that our PLS-based SAE estimator performs well across the considered scenarios, improving the reliability of mean squared error estimation compared with the standard Fay--Herriot model, and achieving comparable explanatory power with fewer components than PCA. This highlights the advantage of supervised dimension reduction in small area applications, where parsimony and predictive accuracy are both essential.

The empirical application to Mozambique reveals substantial geographic heterogeneity in consumption levels, with lower average consumption observed in the central and northern regions. The SAE estimates further uncover clusters of low-income districts adjacent to pockets of relatively higher consumption, patterns that would be difficult to detect otherwise. Moreover, the analysis of the loadings associated with the PFH components provides insights into the most influential predictors, illustrating the interpretability benefits of the proposed approach.

The results obtained here suggest several promising avenues for future methodological development. One natural extension involves introducing robustness either in the dimension-reduction step or in the mixed-model estimation, to better accommodate outlying areas or irregular auxiliary information. Another important direction concerns the MSE estimation: although the parametric bootstrap approach performs well with one or two components, a negative bias emerges when the number of components increases. Double bootstrap extension could improve could more fully capture the uncertainty and its developments is left for future research. 
Spatial extensions also appear particularly relevant, as many socioeconomic indicators and auxiliary sources exhibit strong spatial dependence; embedding PFH within spatial Fay--Herriot models \citep{petrucci2006} would allow supervised dimension reduction that respects such structure.

Overall, the PFH approach offers a flexible, interpretable, and automatic solution to the challenges posed by high-dimensional auxiliary information in area-level SAE. Its strong performance in both simulated and real-data settings suggests that supervised dimension reduction can play a central role in modern SAE applications, supporting the production of reliable and policy-relevant small area indicators in increasingly complex data environments.


\section*{Acknowledgments}
The work of Nicola Salvati was carried out with the support of the project ``Quantification in the Context of Dataset Shift'' (QuaDaSh) (Grant P2022TB5JF, Italy). 
\\ \emph{Conflicts of interest}: The authors declare no conflicts of interest.


\section*{Data availability}
The data that support the findings of this study are available from World Bank group but restrictions apply to the availability of these data, which were used under license for the current study, and so are not publicly available. Data are however available from the authors upon reasonable request and with permission of World Bank group. 


\bibliographystyle{agsm} 
\bibliography{BibPFH}

@article{lahiri2015,
  title={Variable selection for linear mixed models with applications in small area estimation},
  author={Lahiri, Partha and Suntornchost, Jiraphan},
  journal={Sankhya B},
  volume={77},
  pages={312--320},
  year={2015},
  publisher={Springer}
}

@article{de2024extended,
  title={{Extended Beta models for poverty mapping. An application integrating survey and remote sensing data in Bangladesh}},
  author={De Nicol{\`o}, Silvia and Fabrizi, Enrico and Gardini, Aldo},
  journal={The Annals of Applied Statistics},
  volume={18},
  number={4},
  pages={3229--3252},
  year={2024},
  publisher={Institute of Mathematical Statistics}
}

@article{tarozzi2009using,
  title={Using census and survey data to estimate poverty and inequality for small areas},
  author={Tarozzi, Alessandro and Deaton, Angus},
  journal={The review of economics and statistics},
  volume={91},
  number={4},
  pages={773--792},
  year={2009},
  publisher={The MIT Press}
}

@misc{INE2021,
  author       = {INE},
  title        = {{IV Recenseamento Geral da População e Habitação 2017: Resultados definitivos}},
  year         = {2021},
  howpublished = {Instituto Nacional de Estatística, Maputo},
  note          = {\url{https://mozdata.ine.gov.mz/index.php/catalog/24} . [Online. Accessed 9 December 2025]}
}

@misc{ACAPS2023,
  author       = {{ACAPS}},
  title        = {{Mozambique: Impact of the five-year conflict in Cabo Delgado}},
  year         = {2023},
  howpublished = {ACAPS Thematic Report},
  month        = {July},
  note          = {\url{https://www.acaps.org/fileadmin/Data_Product/Main_media/20230707_ACAPS_Thematic_report_Mozambique_impact_of_the_five-year_conflict_in_Cabo_Delgado.pdf}. [Online. Accessed 9 December 2025]}
}

@misc{IOF,
  author       = {{INE}},
  title        = {{Inquérito sobre Orçamento Familiar -- IOF 2019/20: Relatório Final}},
  year         = {2022},
  howpublished = {Instituto Nacional de Estatística, Maputo},
  note          = {\url{https://university.open.ac.uk/technology/mozambique/sites/www.open.ac.uk.technology.mozambique/files/files/IOF%202019_20%20Final%2022_09_2021.pdf}. [Online. Accessed 9 December 2025]}
}

@techreport{Massuanganhe2005,
  author      = {Massuanganhe, Israel Jacob},
  title       = {{Decentralization and District Development: Framework for Decentralized Policies and Local Development Strategies}},
  institution = {{UNDP/UNCDF Mozambique}},
  year        = {2005},
  type        = {Working Paper},
  number={3}
}

@Article{molina-marhuenda:2015,
    author = {Isabel Molina and Yolanda Marhuenda},
    title = {{sae}: An {R} Package for Small Area Estimation},
    journal = {The R Journal},
    year = {2015},
    volume = {7},
    number = {1},
    pages = {81--98}
  }

@misc{canada,
  title={{Survey of Household Spending 2006: Data Quality Indicators}},
  author={{Statistics Canada}},
  year={2010}, 
  howpublished ={\url{https://www150.statcan.gc.ca/n1/en/pub/62f0026m/62f0026m2010003-eng.pdf?st=EiIPLz-D}},
note={[Online. Accessed 3 September 2025]}
}

@inproceedings{brown2001,
	author = {Brown, Gary and Chambers, Ray and Heady, Patrick and Heasman, Dick},
	booktitle = {Proceedings of {Statistics Canada Symposium} 2001 on {Achieving} data quality in a statistical agency: a methodological perspective},
	title = {Evaluation of small area estimation methods--an application to unemployment estimates from the {\scshape UK LFS}},
	year = {2001}}

@book{rao2015,
	address = {New York},
	author = {J. N. K. Rao and I. Molina},
	publisher = {2nd ed. Wiley},
	title = {Small Area Estimation},
	year = {2015}}

@article{FH1979,
  title={{Estimates of income for small places: An application of James-Stein procedures to census data}},
  author={Fay, Robert E. and Herriot, Roger A.},
  journal={Journal of the American Statistical Association},
  volume={74},
  number={366a},
  pages={269--277},
  year={1979},
  publisher={Taylor \& Francis}
}

@article{wold1966,
  title={Estimation of principal components and related models by iterative least squares},
  author={Wold, Herman},
  journal={Multivariate analysis},
  pages={391--420},
  year={1966},
  publisher={Academic Press}
}

@article{helland1988,
  title={On the structure of partial least squares regression},
  author={Helland, Inge S},
  journal={Communications in statistics-Simulation and Computation},
  volume={17},
  number={2},
  pages={581--607},
  year={1988},
  publisher={Taylor \& Francis}
}

@article{petrucci2006,
	author = {Petrucci, A. and Salvati, N.},
	journal = {Journal of Agricultural Biological and Environmental
Statistics},
	pages = {169--182},
	title = {{Small Area Estimation considering Spatial Correlation in Watershed Erosion Assessment Survey}},
	volume = {11},
	year = {2006}}

@article{martinez2018,
  title={A new estimator for the covariance of the {PLS} coefficients estimator with applications to chemical data},
  author={Mart{\'\i}nez, Jos{\'e} L and Leiva, V{\'\i}ctor and Saulo, Helton and Ruggeri, Fabrizio and Arteaga, Gean C},
  journal={Journal of Chemometrics},
  volume={32},
  number={12},
  pages={e3069},
  year={2018},
  publisher={Wiley Online Library}
}

@article{egger2023,
  title={{Evolution of multidimensional poverty in crisis-ridden Mozambique}},
  author={Egger, Eva-Maria and Salvucci, Vincenzo and Tarp, Finn},
  journal={Social Indicators Research},
  volume={166},
  number={3},
  pages={485--519},
  year={2023},
  publisher={Springer}
}

@techreport{belchior2025,
  title={{Spatial Dynamics and Convergence of Multidimensional Poverty in Mozambique}},
  author={Belchior, Manuel José  and Chagas, André Luis Squarize},
  year={2025},
  institution={N{\'u}cleo de Economia Regional e Urbana da Universidade de S{\~a}o Paulo (NEREUS)}
}

@article{halpernmeekin2008,
  author = {Halpern-Meekin, Sarah and Tach, Laura},
  title = {Heterogeneity in Two-Parent Families and Adolescent Well-Being},
  journal = {Journal of Marriage and Family},
  year = {2008},
  volume = {70},
  number = {2},
  pages = {435--451},
  doi = {10.1111/j.1741-3737.2008.00492.x}
}

@article{mclanahan2009,
  author = {McLanahan, Sara},
  title = {Fragile Families and the Reproduction of Poverty},
  journal = {The ANNALS of the American Academy of Political and Social Science},
  year = {2009},
  volume = {621},
  number = {1},
  pages = {111--131},
  doi = {10.1177/0002716208324862}
}

@techreport{rutstein2008,
  author = {Rutstein, Shea O.},
  title = {{The DHS Wealth Index: Approaches for Rural and Urban Areas}},
  institution = {Macro International Inc.},
  year = {2008},
  type = {DHS Working Papers},
  number = {60},
  address = {Calverton, Maryland}
}

@article{d2024multidimensional,
  title={Multidimensional poverty: an analysis of definitions, measurement tools, applications and their evolution over time through a systematic review of the literature up to 2019},
  author={D’Attoma, Ida and Matteucci, Mariagiulia},
  journal={Quality \& Quantity},
  volume={58},
  number={4},
  pages={3171--3213},
  year={2024},
  publisher={Springer}
}

@article{alkire2022revising,
  title={{Revising the global multidimensional poverty index: Empirical insights and robustness}},
  author={Alkire, Sabina and Kanagaratnam, Usha and Nogales, Ricardo and Suppa, Nicolai},
  journal={Review of Income and Wealth},
  volume={68},
  pages={S347--S384},
  year={2022},
  publisher={Wiley Online Library}
}

@article{beegle2010orphanhood,
  title={{Orphanhood and the living arrangements of children in sub-Saharan Africa}},
  author={Beegle, Kathleen and Filmer, Deon and Stokes, Andrew and Tiererova, Lucia},
  journal={World Development},
  volume={38},
  number={12},
  pages={1727--1746},
  year={2010},
  publisher={Elsevier}
}

@article{booysen2008using,
  title={{Using an asset index to assess trends in poverty in seven Sub-Saharan African countries}},
  author={Booysen, Frikkie and Van Der Berg, Servaas and Burger, Ronelle and Von Maltitz, Michael and Du Rand, Gideon},
  journal={World Development},
  volume={36},
  number={6},
  pages={1113--1130},
  year={2008},
  publisher={Elsevier}
}

@techreport{worldbank2018mozambique,
  author = {{World Bank}},
  title = {{Mozambique Poverty Assessment: Strong but Not Broadly Shared Growth}},
  institution = {World Bank},
  year = {2018},
  address = {Washington, DC},
  number = {130329-MZ},
note={\url{https://documents.worldbank.org/en/publication/documents-reports/documentdetail/248561541165040969}. [Online. Accessed 9 December 2025]}
}

@article{isaki1982survey,
  title={Survey design under the regression superpopulation model},
  author={Isaki, Cary T and Fuller, Wayne A},
  journal={Journal of the American Statistical Association},
  volume={77},
  number={377},
  pages={89--96},
  year={1982},
  publisher={Taylor \& Francis}
}

@article{bazzoli2023,
  title={Partial least square based approaches for high-dimensional linear mixed models: C. Bazzoli et al.},
  author={Bazzoli, Caroline and Lambert-Lacroix, Sophie and Martinez, Marie-Jos{\'e}},
  journal={Statistical Methods \& Applications},
  volume={32},
  number={3},
  pages={769--786},
  year={2023}
}

@article{butar2003,
  author  = {Butar, Ferry B. and Lahiri, Partha},
  title   = {On measures of uncertainty of empirical {B}ayes small-area estimators},
  journal = {Journal of Statistical Planning and Inference},
  year    = {2003},
  volume  = {112},
  number  = {1--2},
  pages   = {63--76}
}

@article{gonzalezmanteiga2008,
  author  = {Gonz{\'a}lez-Manteiga, W. and Lombard{\'\i}a, M. J. and Molina, I. and Morales, D. and Santamar{\'\i}a, L.},
  title   = {Bootstrap mean squared error of a small-area {EBLUP}},
  journal = {Journal of Statistical Computation and Simulation},
  year    = {2008},
  volume  = {78},
  number  = {5},
  pages   = {443--462}
}

@article{hallmaiti2006,
  author  = {Hall, Peter and Maiti, Tapabrata},
  title   = {On parametric bootstrap methods for small area prediction},
  journal = {Journal of the Royal Statistical Society: Series B},
  year    = {2006},
  volume  = {68},
  number  = {2},
  pages   = {221--238}
}

\end{document}